
\documentclass[letterpaper, 10 pt, conference]{ieeeconf}  

\IEEEoverridecommandlockouts                              

\overrideIEEEmargins                                      



\usepackage{graphics} 
\usepackage{epsfig} 
\usepackage{mathptmx} 
\usepackage{amsmath} 
\usepackage{amssymb}  
\usepackage{xcolor}

\newtheorem{theorem}{Theorem}
\newtheorem{corollary}{Corollary}

\newtheorem{definition}{Definition}

\title{\LARGE \bf
On the Use of the Smith-McMillan Form \\in Decoupling System Dynamics
}

\author{Clarisse Pétua Bosman Barros, Hans Butler, Roland Tóth
\thanks{C.P. Bosman Barros, H. Butler and R. Tóth are with the Eindhoven University of Technology, Department of Electrical Engineering, Control Systems Group, Eindhoven, The Netherlands. Email: {\tt\small c.p.bosman.barros@tue.nl}}
\thanks{R. Tóth is also with the Systems and Control Laboratory, Institute for Computer Science and Control, Kende u. 13-17, H-1111 Budapest, Hungary. H. Butler is also with ASML, 5504 DR Veldhoven, The Netherlands.}
\thanks{This work is part of the TU/e-ASML mini Impulse program ``Advanced piezo-electric wafer stage for lithography and metrology".}
}

\begin{document}

\maketitle
\thispagestyle{empty}
\pagestyle{empty}

\begin{abstract}
In this paper, the use of the Smith-McMillan form in decoupling multiple-input multiple-output system dynamics is analyzed. In short, from a transfer matrix plant model one can obtain a decoupling compensator which leads to a decoupled plant that contains the same transmission poles and zeros of the transfer matrix of the original system. As a result, full decoupling of the plant transfer matrix is obtained for all frequencies, which can be individually addressed by single-input single-output control. The aim of this paper is to present conditions for the decoupled system that guarantee internal stability and performance requirements for the overall control system. Performance specifications are defined in terms of magnitude limits for the maximum singular value of the closed-loop transfer matrices. The potential of the decoupling procedure is shown in a simulation  study of a mechanical system.
\end{abstract}

\vspace{-4mm}
\section{INTRODUCTION}

A conceptually direct approach of controller synthesis for multiple-input multiple-output (MIMO) plants is given by a two-step procedure in which a decoupling compensator is designed first to deal with the interactions in the plant, followed by a diagonal controller's design to the obtained decoupled plant composed of single-input single-output (SISO) systems \cite{skogestad2007multivariable}. The variety of design methods for SISO systems can be used straightforwardly and the overall controller synthesis procedure is simple, but the success of this methodology is subject to the quality the system's decoupling.

There are several methods to obtain a decoupling compensator. The most straightforward procedure is based on the inverse of the transfer matrix of the plant, theoretically decoupling by transforming it into an identity matrix  \cite{skogestad2007multivariable}. However, such a procedure corresponds to the assumption of a perfect plant model and the possibility of a perfect cancellation of all system dynamics. If the plant transfer matrix contains right half complex plane zeros or if it is composed of proper transfer functions, the inverse is unstable or non-proper, respectively, causing implementation issues. These problems are often handled by using various approximations, degrading the decoupling at specific frequency ranges. 

Decoupling can also be based on the singular value decomposition (SVD) of the plant at a fixed given frequency, ensuring that the plant is decoupled in a frequency range in the vicinity of the fixed frequency, given that the decoupling compensator is a static compensator, i.e., composed of constant matrices. For specific cases, such as plants consisting of symmetrical interconnected systems, this procedure leads to a controller structure which is optimal at the fixed frequency \cite{hovd1997svd}. Conversely,  poor decoupling may be obtained for plants without the specific mentioned structure and with a transfer matrix that varies rapidly as a function of frequency.

For specific applications, the decoupling compensator can also be based on the understanding of the physical characteristics of the system, as in \cite{butler2011position} for a six-degrees-of-freedom stage. Decoupling can be achieved even for nonlinear systems, but it requires specific methodologies and technical expertise of the system operation. The design procedure for this approach is usually complicated and time-consuming.

As an alternative one may use the concept of decoupling the system based on the Smith-McMillan form as in \cite{mohsenizadeh2015multivariable}. The decoupled plant contains the same transmission poles and zeros of the original system transfer matrix, thus maintaining its fundamental dynamic characteristics. The transfer matrices are calculated analytically and full decoupling of the plant transfer matrix is obtained for all frequencies. In addition, in \cite{diaz2019pid,mohsenizadeh2015multivariable} the conditions for the properness of the final controller (diagonal controller and decoupling compensator) are obtained. Furthermore, it is shown that if the output signals track the reference signals for the decoupled system, then reference tracking is also guaranteed for the plant with the decoupling compensator and the designed controller. 

In this paper we further investigate how useful this decoupling technique is, specifically regarding the overall internal stability and achievable closed-loop  performance guarantees. The main contributions of this paper are:
\begin{itemize}
\item[(C1)] Transformations of open- and closed-loop transfer matrices between the original and decoupled domains.
\item[(C2)] Sufficient conditions for internal stability with the decoupling based feedback control system. 
\item[(C3)] Showing how performance requirements for the closed-loop translate to specifications for the decoupled system. 
\end{itemize}The paper is organized as follows. The problem statement and the decoupling compensator design procedure are presented in Sections 2 and 3, respectively. In Section 4,  transformations of open- and closed-loop transfer matrices between the original and decoupled system are derived (C1), which are applied to address stability in Section 5 (C2) and to investigate performance requirements in Section 6 (C3). Finally, this decoupling combined with SISO controller design by loop-shaping is applied to an example of a mechanical system, demonstrating advantages and drawbacks of this decoupling methodology for controller synthesis.  

\textit{Notation:} $\mathbb{C}$ denotes the set of complex numbers and $\mathbb{R}$ the set of real numbers. The imaginary unit is denoted by $j = \sqrt{-1}$. The set of real rational proper and stable transfer functions or transfer matrices is denoted by $\mathcal{R}\mathcal{H}_\infty$. The time variable is denoted by $t\in \mathbb{R}$.

\section{PROBLEM STATEMENT} 

Consider a MIMO plant with $n$ inputs and $n$ outputs modeled as a Linear Time-Invariant (LTI) system with $u(t) \in \mathbb{R}^{n\times 1}$ and $y(t) \in \mathbb{R}^{n\times 1}$ representing, respectively, the input and output signals. 
The input-output relation is expressed as 
\vspace{-0.3cm}
 \begin{equation*}
y(s) = P(s)\:u(s), \vspace{-0.1cm} 
\end{equation*}
where $P:\mathbb{C} \rightarrow \mathbb{P} \subseteq \mathbb{C}^{n\times n}$ is assumed to be a known real rational transfer matrix, $s\in \mathbb{C}$ denotes the complex frequency (Laplace variable) and $y(s)$ and $u(s)$ are the Laplace transforms of $y(t)$ and $u(t)$, respectively, on their appropriate region of convergence. The analysis presented here considers a square transfer matrix for the plant, but it can be extended to plants at which the number of inputs and outputs differ. 

A control law defined in the format of a transfer matrix $C(s)$ needs to be designed in order to have the system tracking a set of references $r(t)$. In addition, the control system needs to be such that the closed-loop system is internally stable, the controller is proper and specifications defined for the frequency domain are fulfilled. The connection between the controller and the plant is represented in Figure \ref{fig:blockDiag_controlSys}.

\begin{figure}
      \centering
		\includegraphics[width = 6cm]{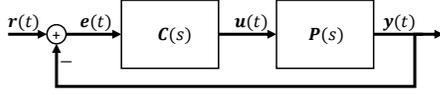} \vspace{-0.2cm}
      \caption{Conventional negative feedback interconnection (original system).}
      \label{fig:blockDiag_controlSys} \vspace{-0.4cm}
\end{figure}


\section{SMITH-MCMILLAN FORM BASED DECOUPLING METHOD}

The Smith-McMillan form is usually used to determine the transmission poles and zeros of a MIMO system \cite{werner2012control}. The decomposition performed to obtain the Smith-McMillan form is done by the multiplication of the transfer matrix by unimodular\footnote{Unimodular polynomial matrices are polynomial matrices whose determinant is a nonzero constant, which implies that they are full rank and their inverse is also a polynomial matrix.} polynomial matrices $U(s)$ and $V(s)$, $\mathbb{C} \rightarrow \mathbb{C}^{n\times n}$:
 \begin{equation}\label{eq:PtransfPSM}
P^{SM}(s) = U(s)\:P(s)\:V(s).\vspace{-0.1cm}
\end{equation} The procedure to find $U$, $V$ and consequently $P^{SM}$ is well defined, widely described in literature (see \cite{macfarlane1976poles}, for instance) and will not be addressed in this paper. 

The Smith-McMillan form $P^{SM}$ of the transfer matrix $P$ is a diagonal transfer matrix. It contains the the transmission poles and zeros of $P$ itself, which are not necessarily the poles and zeros of the individual elements of the matrix. This general definition of poles and zeros better characterizes the properties of MIMO systems (refer to \cite{macfarlane1976poles} for more details).

The decoupling procedure is performed as follows. Consider the matrices $U(s)$ and $V(s)$ to be known. Based on 
\vspace{-0.3cm}
\begin{equation*}
y(s) = P(s)\:u(s) \Rightarrow U(s)\:y(s) = U(s)\:P(s)\:u(s), \vspace{-0.1cm} 
\end{equation*} 
if we define $ u^{SM}(s) = V^{-1}(s)u(s)$ and $y^{SM}(s) = U(s)y(s)$:
\vspace{-0.1cm} \begin{equation*}
\begin{split}
y^{SM}(s)& =  U(s)P(s)u(s) = U(s)P(s)V(s)u^{SM}(s) \\ &\Rightarrow y^{SM}(s) = P^{SM}(s)u^{SM}(s).
\end{split}  \vspace{-0.1cm} 
\end{equation*} Thus, the plant can be represented as in Figure  \ref{fig:blockDiag_csPlantSM2} with $u^{SM}$ and $y^{SM}$ as input and output signals of  $P^{SM}$.
By defining the controller in the format
\vspace{-0.1cm} \begin{equation}\label{eq:CSMtransfC}
C(s) = V(s)C^{SM}(s)U(s),  \vspace{-0.1cm} 
\end{equation} as depicted in Figure  \ref{fig:blockDiag_csPlantContrSM}, the controller $C^{SM}(s)$ can be designed specifically for the diagonal plant $P^{SM}(s)$. By imposing $C^{SM}(s)$ to be also diagonal, the decoupling of the inputs and outputs is accomplished (see \cite{diaz2019pid} for more details).

\begin{figure}
      \centering
		\includegraphics[width = 8.5cm]{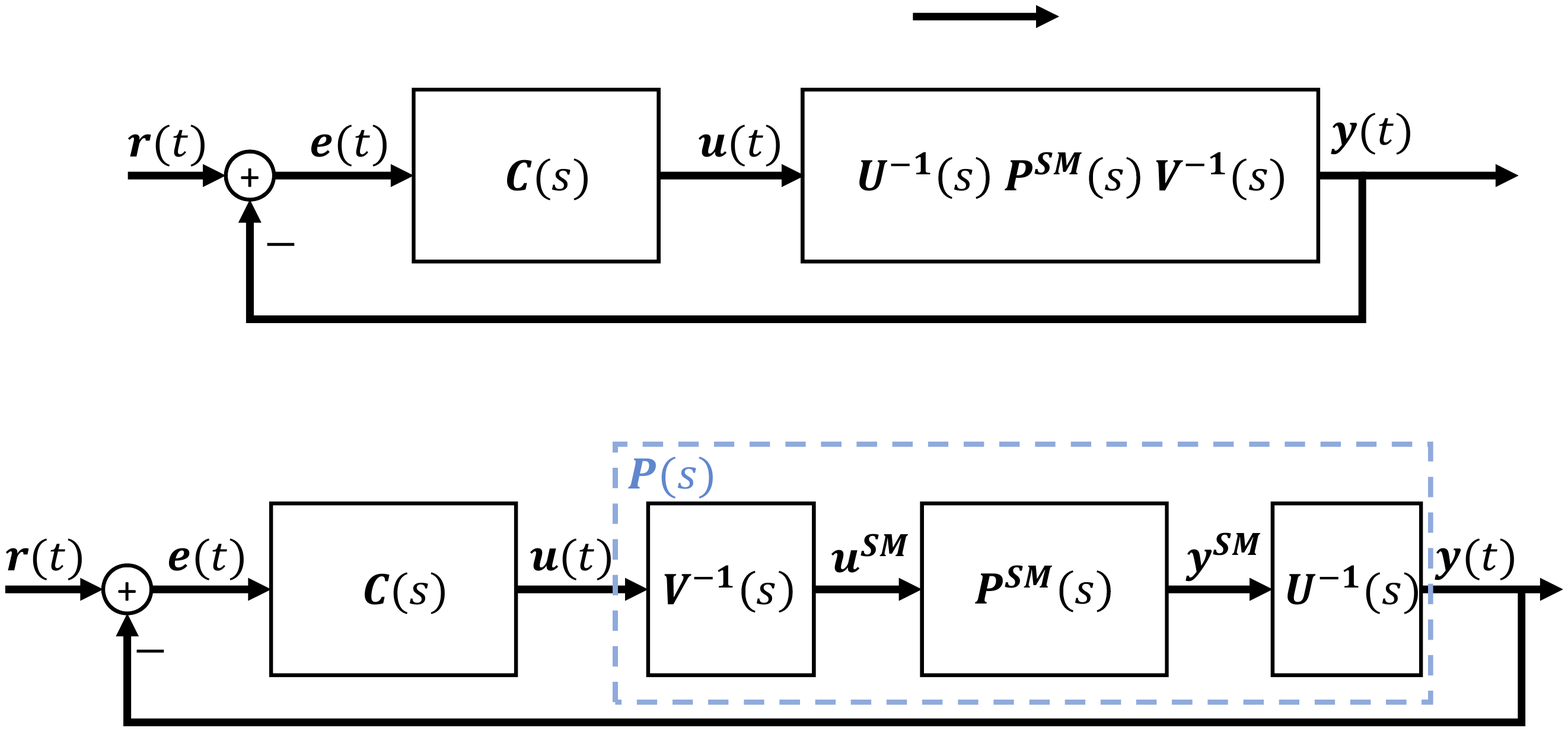} \vspace{-0.2cm}
      \caption{Control system with the plant represented by the Smith-McMillan transformation: $P(s) = U^{-1}(s)P^{SM}(s)V^{-1}(s)$.}
      \label{fig:blockDiag_csPlantSM2}\vspace{-0.2cm}
\end{figure}

The diagonal plant and controller are noted as
\vspace{-0.1cm}\begin{align*}
P^{SM}(s) &= \mathrm{diag} (P^{SM}_1(s),...,P^{SM}_n(s))\\
C^{SM}(s) &= \mathrm{diag} (C^{SM}_1(s),...,C^{SM}_n(s)),
\end{align*} \vskip -0.1cm \noindent where $P^{SM}_i(s)$ and $C^{SM}_i(s)$, $i = 1,...,n$, are scalar real rational transfer functions. A controller can then be designed for each single loop $i$ for the decoupled system, independently, by any SISO control design method.
\begin{figure}
      \centering
		\includegraphics[width = 8.5cm]{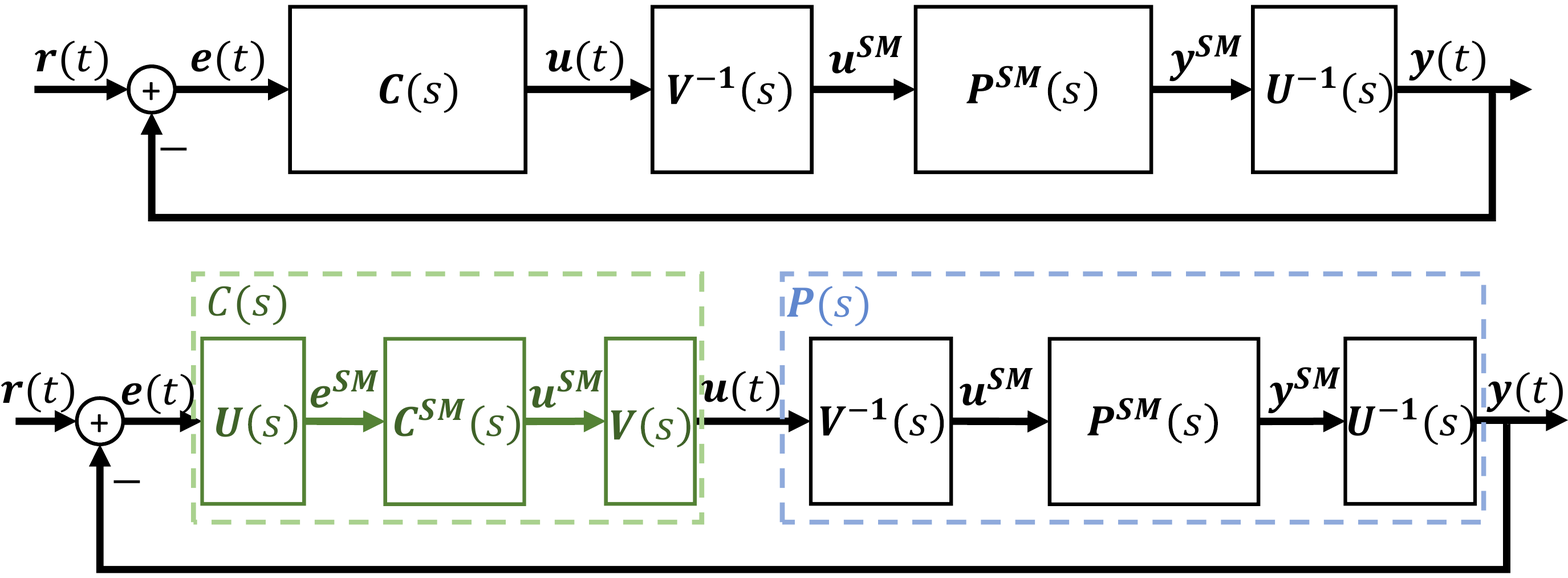} \vspace{-0.2cm}
      \caption{Control system with the plant represented by the Smith-McMillan transformation ($P(s) = U^{-1}(s)P^{SM}(s)V^{-1}(s)$), together with the proposed control structure $C(s) = V(s)C^{SM}(s)U(s)$.}
      \label{fig:blockDiag_csPlantContrSM} \vspace{-0.3cm}
\end{figure}
The control system with $P(s)$ and $C(s)$ is called here \textit{original system} \footnote{The nomenclature essential is adopted because the essential plant consists of the fundamental (essential) dynamic elements of the original plant (the poles and zeros of the transfer matrix), but in a diagonal form.} (Figure  \ref{fig:blockDiag_controlSys}) and the system with $P^{SM}(s)$ and $C^{SM}(s)$ is called \textit{essential system} (Figure  \ref{fig:blockDiag_controlSysSM}) with corresponding transfer matrices, respectively. 
Next, we investigate how this decoupling technique relates to the overall control system design. First, the transformations between domains are presented and, based on them,  the translation of stability and performance requirements between the domains is explored.

\section{TRANSFORMATIONS BETWEEN DOMAINS} 

In this section, the formulas to relate the transfer matrices between the original and essential domains are investigated. This constitutes Contribution (C1).

Assume $P(s)$, $C(s)$, $P^{SM}(s)$ and $C^{SM}(s)$ are fixed real rational proper transfer matrices\footnote{The condition for properness of $C(s)$ based on $C^{SM}(s)$ is given in \cite{mohsenizadeh2015multivariable}.}  and that the closed-loop feedback system interconnections in each domain is well-posed, i.e., all closed-loop transfer matrices are well-defined and proper. From the decoupling procedure,  $U(s)$ and $V(s)$ are unimodular matrices and, as a consequence, $U^{-1}(s)$ and $V^{-1}(s)$ are also polynomial matrices that exist for all $s\in \mathbb{C}$. 
In the sequel the $s$ is omitted  for simplicity and the following matrices properties are used: $(AB)^{-1}= B^{-1} A^{-1}$ and $I = A^{-1}A$, where $A, B \in \mathbb{C}^{n\times n}$ are full rank matrices.

The open-loop transfer matrix is defined as $L\triangleq PC$ for the original and as $L^{SM}\triangleq P^{SM}C^{SM}$ for the essential domain. From (\ref{eq:PtransfPSM}) and (\ref{eq:CSMtransfC}), we have 
\vspace{-0.1cm}\begin{equation*}
\begin{split}
L = (U^{-1}P^{SM}V^{-1})(VC^{SM}U) 
= U^{-1}P^{SM}C^{SM}U  = U^{-1}L^{SM}U.
\end{split}\vspace{-0.2cm}
\end{equation*} The \textit{sensitivity} for the original and essential systems are defined as 
$S \triangleq (I + L)^{-1}$ and $S^{SM} \triangleq (I + L^{SM})^{-1}$. Then, 
\vspace{-0.1cm}\begin{equation*}
\begin{split}
S &= (I + U^{-1}L^{SM}U)^{-1} 
= \left(U^{-1}U + U^{-1}L^{SM}U\right)^{-1} \\
& = (U^{-1}(I + L^{SM})U)^{-1} 
= U^{-1}(I + L^{SM})^{-1}U 
= U^{-1}S^{SM}U.
\end{split}\vspace{-0.1cm}
\end{equation*} Similarly, the other sensitivities transfer matrices are defined and the transformations between domains are presented.
\begin{itemize}
\item \textit{complementary sensitivity} for the original and essential systems: $T = L(I + L)^{-1}$ and $T^{SM} = L^{SM}(I + L^{SM})^{-1}$, respectively. Transformation:
\end{itemize}\begin{equation*}
\begin{split}
T &= \left(U^{-1}L^{SM}U)\right(I + U^{-1}L^{SM}U)^{-1}\\
&= \left(U^{-1}L^{SM}U\right) U^{-1}(I + L^{SM})^{-1}U\\
&= U^{-1}L^{SM}(I + L^{SM})^{-1}U 
= U^{-1}T^{SM}U;
\end{split}\vspace{-0.1cm}
\end{equation*}\begin{itemize}
\item \textit{process sensitivity} for the original and essential systems:  
$S_P \triangleq (I + L)^{-1}P$ and $S_P^{SM} \triangleq (I + L^{SM})^{-1}P^{SM}$, respectively. Transformation: 
\end{itemize}
\begin{equation*}
\begin{split}
S_P &= \left(I + U^{-1}L^{SM}U\right)^{-1} U^{-1}P^{SM}V^{-1}\\
&= U^{-1}(I + L^{SM})^{-1}UU^{-1}P^{SM}V^{-1} \\
&= U^{-1}(I + L^{SM})^{-1}P^{SM}V^{-1} 
= U^{-1}S_P^{SM}V^{-1};
\end{split}\vspace{-0.1cm}
\end{equation*} 
\begin{itemize}
\item \textit{controller sensitivity} for the original and essential systems:
$S_C \triangleq C(I + L)^{-1}$ and $S_C^{SM} \triangleq C^{SM}(I + L^{SM})^{-1}$, respectively. Transformation: 
\end{itemize}\begin{equation*}
\begin{split}
S_C &= VC^{SM}U\left(I + U^{-1}L^{SM}U\right)^{-1} \\
&= VC^{SM}UU^{-1}(I + L^{SM})^{-1}U \\
&= VC^{SM}(I + L^{SM})^{-1}U 
= VS_C^{SM}U.\vspace{-0.1cm}
\end{split}
\end{equation*} 

\begin{figure}
	    \centering
		\includegraphics[width = 7.5cm]{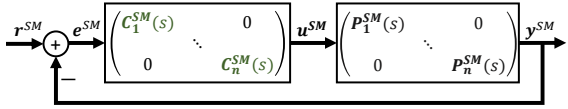} \vspace{-0.2cm}
	    \caption{Decoupled feedback control system (essential system).}
	\label{fig:blockDiag_controlSysSM} \vspace{-0.4cm}
\end{figure}

The presented definitions are regarding transfer matrices evaluated at the output of the plant. The respective input transfer matrices for the essential system are defined as: $L_I^{SM} \triangleq C^{SM}P^{SM}$, $S_I^{SM} \triangleq (I + L_I^{SM})^{-1}$, $T_I^{SM} \triangleq (I + L_I^{SM})^{-1}L_I^{SM}$, $S_{PI}^{SM} \triangleq P^{SM}(I + L_I^{SM})^{-1}$ and $S_{CI}^{SM} \triangleq (I + L_I^{SM})^{-1}C^{SM}$. A similar procedure can be applied to the input open-loop transfer matrices and to the input sensitivities in order to obtain the transformations between domains, as follows.
\begin{itemize}
\item Input loop transfer martrix: $L_I \triangleq CP = (VC^{SM}U)(U^{-1}P^{SM}V^{-1}) 
= VL_I^{SM}V^{-1};$
\item Input sensitivity: $S_I \triangleq \left( I + L_I\right)^{-1} = V S_I^{SM} V^{-1};$
\item Input comp. sens.: $T_I \triangleq \left( I + L_I\right)^{-1}L_I = V T_I^{SM} V^{-1};$ 
\item Input process sens.: $S_{PI} \triangleq P\left( I + L_I\right)^{-1} = U^{-1}PS_{PI}^{SM} V^{-1};$ 
\item Input control sens.: $S_{CI} \triangleq \left( I + L_I\right)^{-1}C = V S_{CI}^{SM} U.$ 
\end{itemize}

Given that $U$, $V$,  $U^{-1}$ and $V^{-1}$ are used to transform from one domain into the other, they are named \textit{transformation matrices}. The relations between domains are used in the next section to determine sufficient conditions for stability of the original system, based on the stability of the essential system.


\section{STABILITY ANALYSIS} 

In this section, the conditions for the stability of the original system are presented in case of a stable closed-loop in the essential domain, which forms Contribution C2. The internal stability concept is analyzed, which guarantees that all signals in the system are bounded provided that the signals injected at any location are bounded  \cite{zhou1998essentials,skogestad2007multivariable}. 

Consider a system described by the standard block diagram in Figure  \ref{fig:blockDiag_csPlantSM3} and assume that the system is well-posed\footnote{Well-posedness is equivalent to that $S$ exists and is proper \cite{zhou1998essentials}.}. Notice that it is equivalent to the system in Figure  \ref{fig:blockDiag_csPlantContrSM}, but with all possible inputs and outputs considered.

\begin{figure}
      \centering
		\includegraphics[width = 6cm]{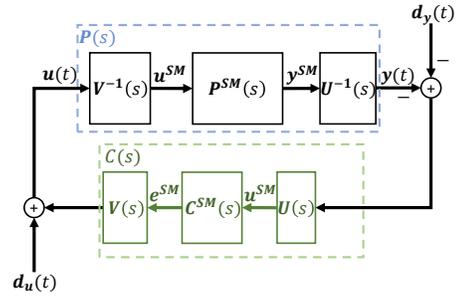} \vspace{-0.2cm}
      \caption{Block diagram used to check internal stability of a feedback system.}
      \label{fig:blockDiag_csPlantSM3}\vspace{-0.4cm}
\end{figure}

\begin{definition}\textit{(Internal stability)} The system of Figure  \ref{fig:blockDiag_csPlantSM3} is said to be internally stable if it is well-posed and \vspace{-0.1cm}
\begin{equation}\label{eq:internStabTestMatrix}
\begin{bmatrix}
(I +CP)^{-1} & - C(I +PC)^{-1}\\
P(I +CP)^{-1} &- (I +PC)^{-1}
\end{bmatrix}
\end{equation}
belongs to $\mathcal{R}\mathcal{H}_\infty$.
\end{definition} 

\begin{theorem}\label{th:internalStability}
The original system in Figure  \ref{fig:blockDiag_csPlantContrSM} is  internally stable if it is well-posed and if the essential feedback system in Figure \ref{fig:blockDiag_controlSysSM} is internally stable.
\end{theorem}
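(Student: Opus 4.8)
\noindent\emph{Proof idea.} The plan is to exhibit the internal-stability test matrix (\ref{eq:internStabTestMatrix}) of the original loop as a polynomial pre- and post-multiplication of the corresponding test matrix of the essential loop, and then to verify the three defining properties of $\mathcal{R}\mathcal{H}_\infty$ one at a time, using unimodularity of $U,V$ for the pole condition and the well-posedness hypothesis for properness. First I would identify the four blocks of (\ref{eq:internStabTestMatrix}) with the closed-loop maps of Section~4, namely $(I+CP)^{-1}=S_I$, $C(I+PC)^{-1}=S_C$, $P(I+CP)^{-1}=S_{PI}$ and $(I+PC)^{-1}=S$, and likewise for $P^{SM},C^{SM}$. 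Substituting the domain transformations $S_I=VS_I^{SM}V^{-1}$, $S_C=VS_C^{SM}U$, $S_{PI}=U^{-1}S_{PI}^{SM}V^{-1}$ and $S=U^{-1}S^{SM}U$ derived in Section~4 gives
\[
\begin{bmatrix} S_I & -S_C \\ S_{PI} & -S \end{bmatrix}
= \begin{bmatrix} V & 0 \\ 0 & U^{-1} \end{bmatrix}
\begin{bmatrix} S_I^{SM} & -S_C^{SM} \\ S_{PI}^{SM} & -S^{SM} \end{bmatrix}
\begin{bmatrix} V^{-1} & 0 \\ 0 & U \end{bmatrix},
\]
i.e.\ $M=\mathcal{U}\,M^{SM}\,\mathcal{V}$, where $M^{SM}$ is precisely the matrix (\ref{eq:internStabTestMatrix}) written for the pair $P^{SM},C^{SM}$, and $\mathcal{U},\mathcal{V}$ are block-diagonal matrices whose blocks are among $U,V,U^{-1},V^{-1}$.

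Next I would check $M\in\mathcal{R}\mathcal{H}_\infty$ property by property. Realness and rationality are immediate, since every factor is real rational. For analyticity on the closed right half-plane, the key point is that $U$ and $V$ are unimodular, so $U,V,U^{-1},V^{-1}$, and hence every entry of $\mathcal{U}$ and $\mathcal{V}$, are polynomials, which have no finite poles; therefore each entry of $M=\mathcal{U}M^{SM}\mathcal{V}$ is a finite sum of products (polynomial)$\cdot$(entry of $M^{SM}$)$\cdot$(polynomial), whose poles are contained among the poles of $M^{SM}$. Since the essential loop is internally stable we have $M^{SM}\in\mathcal{R}\mathcal{H}_\infty$, so all of those poles lie in the open left half-plane and $M$ is analytic on the closed right half-plane.

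The remaining point, properness, is the one I expect to be the real obstacle, because the factorization alone does not deliver it: $\mathcal{U}$ and $\mathcal{V}$ are in general non-proper polynomial matrices, so multiplying a proper $M^{SM}$ by them could in principle destroy properness at infinity. This is exactly where the well-posedness hypothesis on the original loop enters. Since well-posedness is equivalent to $S=(I+PC)^{-1}$ being proper (equivalently to $I+P(\infty)C(\infty)$ being nonsingular, equivalently to $S_I=(I+CP)^{-1}$ being proper), and $P,C$ are proper by the standing assumption of Section~4, it follows that $S_C=S_IC$ and $S_{PI}=SP$ are proper too, so all four blocks of $M$ are proper. Combining the three properties yields $M\in\mathcal{R}\mathcal{H}_\infty$, and together with the assumed well-posedness this is, by Definition~1, internal stability of the original system. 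The only genuinely delicate step is thus to keep the ``no closed-RHP poles'' part (guaranteed by unimodularity of $U,V$) cleanly separated from the ``properness at infinity'' part (guaranteed by well-posedness of the original loop, \emph{not} inherited from the essential loop).
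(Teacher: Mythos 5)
Your proposal is correct and follows essentially the same route as the paper: express the internal-stability test matrix of the original loop as the essential-loop test matrix conjugated by (block-diagonal) polynomial matrices built from $U,V,U^{-1},V^{-1}$, and conclude that no closed-right-half-plane poles can be introduced since those factors are polynomial. Your explicit separation of the properness issue (which must come from the well-posedness hypothesis on the original loop, not from the essential loop) is a welcome sharpening of a point the paper passes over quickly, but it does not constitute a different argument.
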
 

\begin{proof}
If the essential system is internally stable, it is well-posed and the closed-loop transfer matrix 
\begin{equation}\label{eq:internStabProofMatrixSM}
\begin{bmatrix}
(I +C^{SM}P^{SM})^{-1} & - C^{SM}(I +P^{SM}C^{SM})^{-1}\\
P^{SM}(I +C^{SM}P^{SM})^{-1} &- (I +P^{SM}C^{SM})^{-1}
\end{bmatrix}
\end{equation} belongs to $\mathcal{R}\mathcal{H}_\infty$. 
Consider that the original system is well-posed. Using the sensitivities transformations from the essential to the original domain, we have that
\vspace{-0.2cm}\begin{multline}\label{eq:internStabProofMatrix}
		\left.
		\begin{bmatrix}
(I +CP)^{-1} & - C(I +PC)^{-1}\\
P(I +CP)^{-1} &- (I +PC)^{-1}
\end{bmatrix} =
		\right. \\
		\left. 
		\Bigg[ \begin{matrix}
V(I +C^{SM}P^{SM})^{-1}V^{-1} \\
U^{-1}P^{SM}(I +C^{SM}P^{SM})^{-1}V^{-1}
\end{matrix} 
\begin{matrix}
 - V C^{SM}S^{SM}U\\
-U^{-1}S^{SM}U
\end{matrix} \Bigg], \right. \vspace{-0.1cm}
		\end{multline}
with $S^{SM} = (I +P^{SM}C^{SM})^{-1}$. Because $U$, $V$, $U^{-1}$ and $V^{-1}$ are polynomial matrices on $s$, the transformation into the original domain only adds zeros to elements of  (\ref{eq:internStabProofMatrixSM}) and it does not change the pole locations or add new poles. Poles can be canceled, but as all poles are in the left half part of the complex plane, internal stability is not affected. 
Thus, the test matrix in (\ref{eq:internStabProofMatrix}) also belongs to $\mathcal{R}\mathcal{H}_\infty$ and the original system is internally stable by definition.
\end{proof}

Observe that if the essential system is unstable, it does not mean that the original system is also unstable, because the transformation from essential to original domain can lead to pole-zero cancellations and the unstable poles from the essential system can be canceled.

\begin{corollary}\label{th:internalStabilityCorollary}
Assume that $C^{SM}$ is a diagonal transfer matrix. The original system in Figure  \ref{fig:blockDiag_csPlantContrSM} is internally stable if it is well-posed and if the decoupled SISO control systems with $C^{SM}_i(s)$ and $P^{SM}_i(s)$ are internally stable for all $i \in [1,n]$.
\end{corollary}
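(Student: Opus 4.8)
The plan is to reduce the statement to Theorem~\ref{th:internalStability} and then exploit the diagonal structure of the essential plant and controller. Since the original system is assumed well-posed, Theorem~\ref{th:internalStability} tells us it suffices to show that the essential feedback system of Figure~\ref{fig:blockDiag_controlSysSM} is internally stable, i.e. that it is well-posed and that the test matrix~(\ref{eq:internStabProofMatrixSM}) belongs to $\mathcal{R}\mathcal{H}_\infty$.

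The key step is to use $P^{SM} = \mathrm{diag}(P^{SM}_1,\dots,P^{SM}_n)$ together with the hypothesis $C^{SM} = \mathrm{diag}(C^{SM}_1,\dots,C^{SM}_n)$. Because both factors are diagonal, their products are diagonal, so $I + C^{SM}P^{SM} = \mathrm{diag}(1 + C^{SM}_i P^{SM}_i)$ and $I + P^{SM}C^{SM} = \mathrm{diag}(1 + P^{SM}_i C^{SM}_i)$; consequently all four blocks in~(\ref{eq:internStabProofMatrixSM}) are diagonal, the $i$-th diagonal entries being $(1+C^{SM}_iP^{SM}_i)^{-1}$, $-C^{SM}_i(1+P^{SM}_iC^{SM}_i)^{-1}$, $P^{SM}_i(1+C^{SM}_iP^{SM}_i)^{-1}$ and $-(1+P^{SM}_iC^{SM}_i)^{-1}$. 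Hence, after a permutation of rows and columns that groups together the entries carrying the same index $i$, the $2n\times 2n$ matrix~(\ref{eq:internStabProofMatrixSM}) becomes block-diagonal, with the $i$-th $2\times 2$ block being exactly the internal-stability test matrix of Definition~1 for the SISO loop $(P^{SM}_i,C^{SM}_i)$.

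From here the conclusion is immediate: a (permuted) block-diagonal real-rational matrix lies in $\mathcal{R}\mathcal{H}_\infty$ if and only if each of its diagonal blocks does, and the same diagonality shows that $S^{SM}=(I+P^{SM}C^{SM})^{-1}$ exists and is proper precisely when each scalar $(1+P^{SM}_iC^{SM}_i)^{-1}$ does, i.e. the essential system is well-posed iff every SISO loop is well-posed. Since each loop $(P^{SM}_i,C^{SM}_i)$ is assumed internally stable, it is well-posed and its $2\times 2$ test matrix is in $\mathcal{R}\mathcal{H}_\infty$ for every $i\in[1,n]$; therefore the essential system is internally stable, and Theorem~\ref{th:internalStability} then yields internal stability of the original system of Figure~\ref{fig:blockDiag_csPlantContrSM}.

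I do not expect a genuine obstacle here: the whole content is the observation that diagonality of \emph{both} $P^{SM}$ and $C^{SM}$ makes every closed-loop transfer matrix in~(\ref{eq:internStabProofMatrixSM}) diagonal, so the MIMO internal-stability test decouples loop by loop. The only points requiring a little care are stating the permutation explicitly (or, equivalently, arguing entrywise, since a diagonal real-rational matrix is in $\mathcal{R}\mathcal{H}_\infty$ iff each diagonal entry is) and noting that well-posedness transfers for the same structural reason; both are routine once the diagonal form is exposed.
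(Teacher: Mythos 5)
Your proposal is correct and follows essentially the same route as the paper: reduce to Theorem~\ref{th:internalStability} and observe that diagonality of both $P^{SM}$ and $C^{SM}$ decouples the internal-stability test matrix~(\ref{eq:internStabProofMatrixSM}) into the $n$ SISO test matrices. The paper states this decoupling in one sentence without the explicit permutation/entrywise argument, so your version is simply a more detailed writeup of the same idea.
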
 

\begin{proof}
Given that $C^{SM}$ is diagonal, the essential system is completely decoupled and the internal stability of the overall MIMO system is guaranteed if all the SISO systems it comprises are internally stable. Thus, based on Theorem~ \ref{th:internalStability}, internal stability of the original system is guaranteed.
\end{proof}

The guarantee of internal stability for the original system, given the conditions on Theorem \ref{th:internalStability}, encourages further investigation of this decoupling strategy. Next, the translation of performance requirements is addressed. 
 
\section{PERFORMANCE ANALYSIS}

The closed-loop performance requirements are defined in terms of magnitude limits for the maximum singular value of the closed-loop transfer matrices at a specified frequency range. We analyze how the original system's specifications can be translated to the essential system (Contribution C3).

In the sequel, the analysis is performed regarding the sensitivity transfer matrix $S$, but can analogously be extended to other closed-loop transfer matrices. Consider a proper linear stable transfer matrix $M(s)$. The $\mathcal{H}_\infty$ norm of $M$ is defined as $\parallel M(j\omega)\parallel_\infty \triangleq \max\limits_\omega \bar{\sigma}(M(j\omega))$, where $\bar{\sigma}(M(j\omega))$ is the maximum singular value of the complex matrix $M(j\omega)$ for a given $s = j\omega$, $\omega\in \mathbb{R}$. 

First, consider that the performance requirement for $S$ is given in the format
\vspace{-0.2cm}\begin{equation}\label{eq:performSrequirement}
\parallel w_1 S\parallel_\infty \leq 1, \vspace{-0.1cm}
\end{equation}where $w_1$ is a scalar filter for which the absolute value of  $1/w_1(j\omega)$ represents the maximum magnitude allowed for $\bar{\sigma}(S(j\omega))$ at each frequency $\omega$; in general, the scalar weighting $w_1(s)$ can be replaced by a matrix $W_1(s)\in \mathbb{C}^{n\times n}$. This type of requirement is commonly used, for instance, in mixed-sensitivity shaping controller synthesis \cite{skogestad2007multivariable}.

To guarantee that the performance requirements are fulfilled, an approach is to use the transformations of the sensitivities to translate the requirements to the essential domain. By substituting $S = U^{-1}S^{SM}U$ in (\ref{eq:performSrequirement}) and applying the submultiplicative property, we obtain that
\vspace{-0.1cm}\begin{equation*}
\parallel w_1 S\parallel_\infty \leq \parallel w_1\parallel_\infty\parallel U^{-1}\parallel_\infty\parallel S^{SM}\parallel_\infty\parallel U\parallel_\infty;
\end{equation*}thus,  (\ref{eq:performSrequirement}) is satisfied if we assure that 
\vspace{-0.1cm}\begin{equation*}
\parallel w_1\parallel_\infty\parallel U^{-1}\parallel_\infty\parallel S^{SM}\parallel_\infty\parallel U\parallel_\infty< 1.
\end{equation*} 
Notably, as $U$, $U^{-1}$, $V$ and $V^{-1}$ are polynomial matrices of $s$, their $\mathcal{H}_\infty$ norm only exists when they are constant matrices (properness condition), which is usually not the case. Two ideas to circumvent this problem are: to limit the performance specifications to a set of frequencies or; to modify the $U$ and $V$ matrices in a way that they become proper. The second option is not analyzed here because the decoupling is degraded. 

By limiting the performance specifications to a set of frequencies $\omega_0\leq\omega\leq\omega_f$ ($\omega_0,\omega,\omega_f\in \mathbb{R}$), we have that the maximum singular values of the transformation matrices are defined for each $\omega$ in this interval. Thus, the performance  specification can be defined as
\vspace{-0.1cm}\begin{equation*}
\bar{\sigma}(w_1(j\omega)S(j\omega)) < 1 \; \forall \omega\in (\omega_0,\omega_f).
\end{equation*} Applying the matrix transformation between domains and the submultiplicative property, we have
\vspace{-0.1cm}\begin{equation*}
\begin{split}
\bar{\sigma}(w_1(j\omega)S(j\omega)) &= \bar{\sigma}(w_1(j\omega)U^{-1}(j\omega)S^{SM}(j\omega)U(j\omega))
\end{split} \vspace{-0.1cm}
\end{equation*}
\begin{flushright}
$\leq \bar{\sigma}(w_1(j\omega))\bar{\sigma}(U^{-1}(j\omega))\bar{\sigma}(S^{SM}(j\omega))\bar{\sigma}(U(j\omega))
$.
\end{flushright} The performance requirement is satisfied if
\vspace{-0.1cm}\begin{equation}\label{eq:performSrequirementLim}
\bar{\sigma}(S^{SM}(j\omega)) \leq \frac{1}{\bar{\sigma}(w_1(j\omega))\bar{\sigma}(U^{-1}(j\omega))\bar{\sigma}(U(j\omega))}.\vspace{-0.1cm}
\end{equation}

The obtained inequality (\ref{eq:performSrequirementLim}) is a sufficient but conservative condition. For high frequencies, for instance, if $U$ is not a constant matrix, then $\bar{\sigma}(U(j\omega))$ and $\bar{\sigma}(U^{-1}(j\omega))$ have high magnitude. In addition, if usual control design requirements are considered, the maximum magnitude of $1/w_1(j\omega)$ is $1/2$ ($-6$ dB)  \cite{skogestad2007multivariable}. Conversely, for proper transfer function elements in $L^{SM}$, $\bar{\sigma}(S^{SM}(j\omega))$ is approximately one. These expected magnitudes contradict the inequality in (\ref{eq:performSrequirementLim}), indicating that the conservativeness introduced by the proposed inequality is impracticable for high frequencies. 

The conservative characteristic is introduced mostly by the submultiplicative property: the effect of the poles of $S^{SM}$ along the frequency from the designed system cancels the effect of the zeros of the transformation matrices while converting to $S$, but these cancellation effects are not present when separating the maximum singular values of each matrix for the inequality. Although the condition is excessively conservative for some frequencies, it can still be useful for a specific range. Possibly at low/medium frequency ranges for $S$. For the complementary sensitivity $T$, there can be compatibility issues caused by the conservative characteristic of the inequality at low frequencies, as demonstrated next in the example. 

In conclusion, the general non-properness of the transformation matrices restrains the use of $\mathcal{H}_\infty$ norm together with the submultiplicative property. The maximum singular value norm at a specific frequency range is proposed to circumvent this limitation, but it may lead to excessively restrictive conditions when combined to the submultiplicative property. To have less conservative conditions, first the transformations can be performed and afterwards the performance specifications applied, such as 
$\parallel w_1 S\parallel_\infty = {\parallel w_1 U^{-1}S^{SM}U \parallel_\infty \leq 1}$, but it possibly results in complex requirements for the essential system. Another option is to perform the domain transformation and have specific conditions for each term of the obtained matrices, as demonstrated next in the example. 



\section{EXAMPLE}

The Smith McMillan decoupling procedure, combined with manual control tuning for the decoupled essential system, is applied to a simple example of a mechanical system. The following one-dimensional mechanical system is considered: two masses, two springs, two dampers and two input forces, arranged as depicted in Figure \ref{fig:rigidBodySystem}. 
\begin{figure}
      \centering
		\includegraphics[width = 7cm]{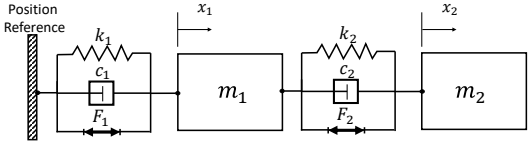} 
      \caption{Schematic view of the plant for the example. Two mass system with two input forces, $F_1$ and $F_2$, and two positions measured, $x_1$ and $x_2$.}
      \label{fig:rigidBodySystem}\vspace{-0.4cm}
\end{figure}

The system has two inputs, $F_1(t)$ and $F_2(t)$, and the outputs chosen are $x_1(t)$ and $x_2(t)$. The differential equations that describe the system dynamics are:
\begin{equation*}
\begin{split}
m_1\ddot{x}_1(t) = & \: F_1(t) - F_2(t) - k_1x_1(t)- c_1\dot{x}_1(t)\\
& - k_2(x_1(t) - x_2(t))- c_2(\dot{x}_1(t) - \dot{x}_2(t)),\\
m_2\ddot{x}_2(t) = & \: F_2(t) + k_2(x_1(t) - x_2(t)) + c_2(\dot{x}_1(t) - \dot{x}_2(t)),
\end{split}
\end{equation*} being respectively $\dot{x}$ and $\ddot{x}$ the first and second derivative of $x$ with respect to time $t$, $k_1$ and $k_2$ are the springs stiffness, and $c_1$ and $c_2$ are the dampers damping coefficient. Table~\ref{tab:table_example} contains the considered values for the coefficients of the model which are used onward. 
\vspace{-0.2cm}
\begin{table}[h]
\caption{Values of the coefficients used in the model}
\vspace{-0.1cm}
\label{tab:table_example}
\centering
\begin{tabular}{|c|c|c|c|c|c|}
\hline
$m_1$ & $m_2$ & $k_1$ & $c_1$ & $k_2$ & $c_2$\\
\hline
$1$ kg & $1$ kg & $10$ N/m & $10$ kg/s& $10$ N/m & $10$ kg/s\\
\hline
\end{tabular}
\end{table}
\vspace{-0.3cm}

By computing the Laplace transform and isolating the positions $x_1$ and $x_2$, we have that $ X (s)= P(s)\:F(s) $ with
\begin{equation*}
P(s) = 
\begin{bmatrix}
\frac{s^2 + 10s + 10}{s^4 + 30s^3  + 130s^2  + 200 s + 100} & -\frac{s^2}{s^4 + 30s^3  + 130s^2  + 200 s + 100} \\
\frac{10 + 10s}{s^4 + 30s^3  + 130s^2  + 200 s + 100} & \frac{s^2+10s+10}{s^4 + 30s^3  + 130s^2  + 200 s + 100}
\end{bmatrix},
\end{equation*} 
\begin{equation*}
X(s) = \begin{bmatrix}
X_1(s) \\ X_2(s)
\end{bmatrix} \quad \mathrm{and} \quad
F(s) = \begin{bmatrix}
F_1(s) \\ F_2(s)
\end{bmatrix}.
\end{equation*}
The $X_1(s)$, $X_2(s)$, $F_1(s)$ and $F_2(s)$ represents the Laplace transforms of $x_1(t)$, $x_2(t)$, $F_1(t)$ and $F_2(t)$, respectively.

The Smith-McMillan form of $P$ is 
\begin{equation*}
P^{SM}(s) = 
\begin{bmatrix}
\frac{1}{s^4 + 30s^3  + 130s^2  + 200 s + 100} & 0 \\
0 & 1
\end{bmatrix},
\end{equation*} computed by using symbolic variables with the \textit{smithForm()} function on Matlab. The $(s)$ is omitted in the sequel for simplicity. The transformation $P^{SM} = U\:P\:V$ is performed by the following unimodular matrices:
\begin{equation*}
U = 
\begin{bmatrix}
0 & 1\\
1 & \frac{s^3 + 29s^2 +100s + 90}{10}
\end{bmatrix}
\mathrm{,} \quad
V = 
\begin{bmatrix}
- \frac{(s+9)}{10} & s^2 + 10s + 10\\
1 & -10s-10
\end{bmatrix}.
\end{equation*}

The main control objective is to have each mass following a specified reference trajectory. Before designing the controllers, it is necessary to determine the lowest relative order permitted for properness of the original system's controller.

The controller for the essential system is defined as $ C^{SM}(s) = \mathrm{diag} (C^{SM}_1(s),C^{SM}_2(s))$, with $C_{1}^{SM}$ and $C_{2}^{SM}$ to be designed. The original system's controller is then obtained by $C = VC^{SM}U$ and given by $ C = \begin{bmatrix} C_{c1} & C_{c2}
\end{bmatrix}$, with
$ {C_{c1} =
\begin{bmatrix}
(s^2 + 10s + 10)C_{2}^{SM} \\
(-10s-10)C_{2}^{SM}
\end{bmatrix}}$ and
\begin{equation*}
C_{c2} =
\begin{bmatrix}
- \frac{(s+9)}{10}C_{1}^{SM} + \frac{(s^2 + 10s + 10)(s^3 + 29s^2 +100s + 90)}{10}C_{2}^{SM}\\
C_{1}^{SM} +\frac{(-10s-10)(s^3 + 29s^2 +100s + 90)}{10}C_{2}^{SM}
\end{bmatrix}.
\end{equation*} 
The $C$ matrix was divided in columns matrices just to fit properly in this paper format. In order to have the MIMO controller as a proper transfer matrix, $C_1^{SM}$ and $C_2^{SM}$ must have the order of the denominator polynomial in $s$ higher by $1$ and $5$, respectively, than the order of the numerator. The necessity of a relative degree for the design of each controller, specially $C_2^{SM}$, can be challenging. 

Both controllers for the Smith-McMillan plant are designed to have a cross-over frequency of $10$ Hz and following the guidelines for the minimum relative degree,  as follows: 
\begin{itemize}
\item For $C_1^{SM}$, the plant is $P_1^{SM} = \frac{1}{s^4 + 30s^3  + 130s^2  + 200 s + 100}$ and the needed relative degree $1$. The controller is composed of an integrator for zero steady state error and three lead filters for stability;
\item For $C_2^{SM}$, the plant is $P_2^{SM} = 1$ and the needed relative degree $5$. The controller is composed of a second order integrator and three poles for the relative order, and a lead filter for stability. 
\end{itemize} Figure \ref{fig:blockDiag_exampleLsm} contains the obtained open-loop transfer functions for each SISO system. This design is noted as \textbf{Design 1}.


Reference tracking is guaranteed as proven in \cite{mohsenizadeh2015multivariable}. Stability is also guaranteed because each SISO system is internally stable. However, without taking into consideration the effect of the transformation matrices, poor performance is achieved by the control system in the original domain, as depicted in Figure \ref{fig:blockDiag_exampleT}: the magnitude plot from the reference $r_2$ to the position $x_1$ indicates that a change in the reference signal for the position of the mass $2$ has a significant impact in the transient response of the mass $1$ position.

\begin{figure}
      \centering
		\includegraphics[width = 6cm]{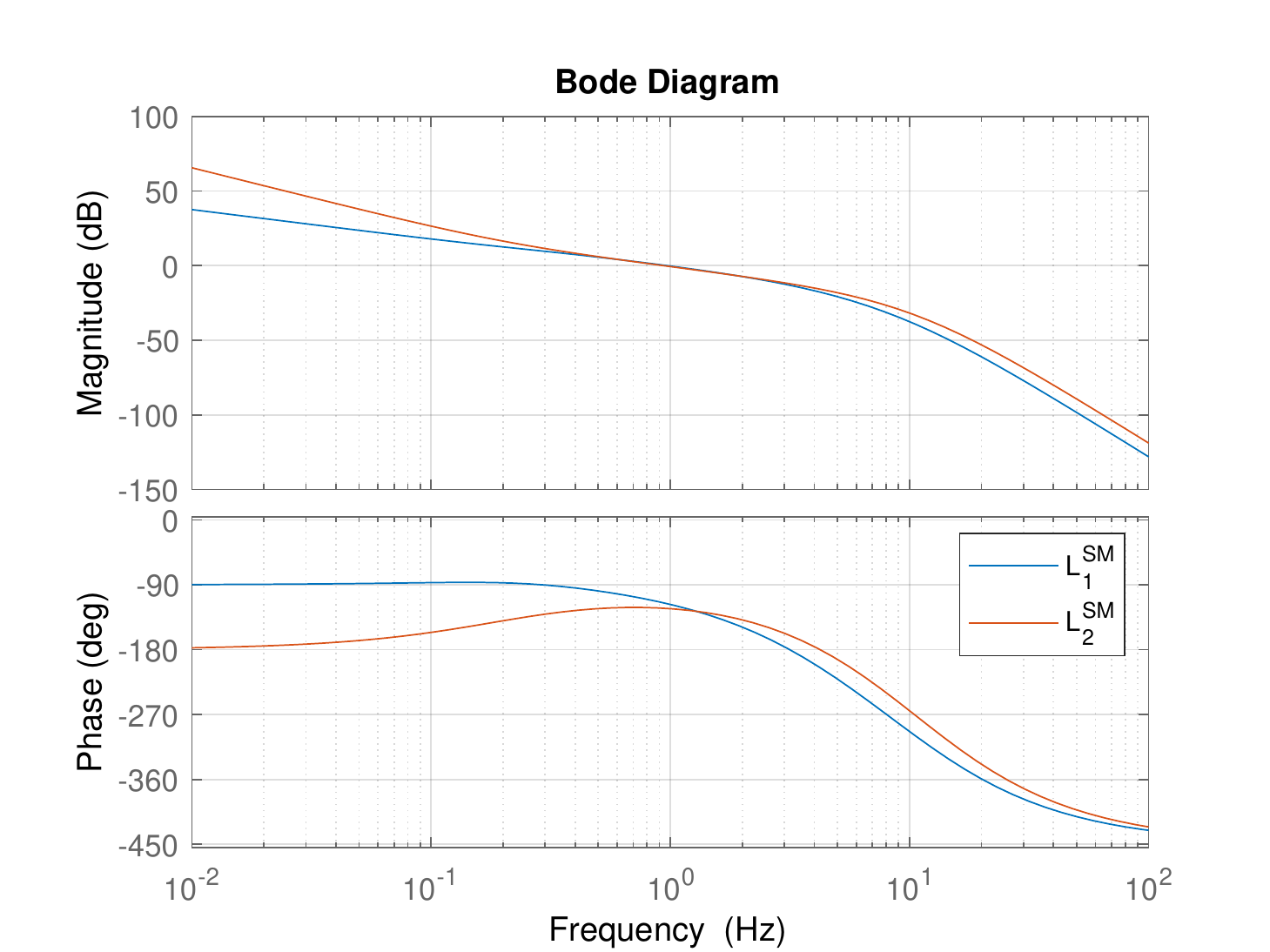} \vspace{-0.2cm}
      \caption{Bode plot of $L_1^{SM}$ (blue curve) and $L_2^{SM}$ (orange curve).}
      \label{fig:blockDiag_exampleLsm} \vspace{-0.4cm}
\end{figure}

\begin{figure}
      \centering
		\includegraphics[width = 6.5cm]{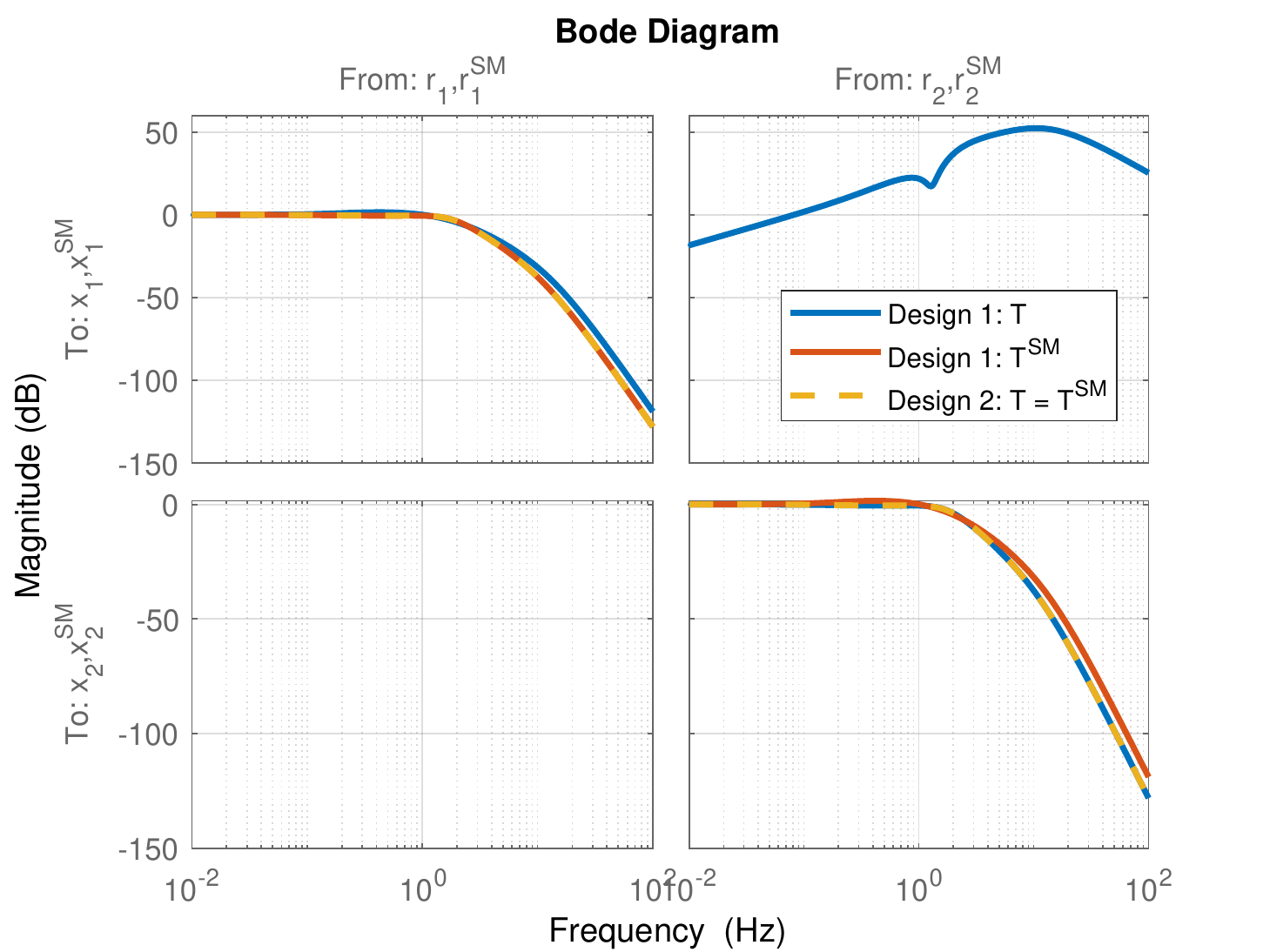} \vspace{-0.2cm}
      \caption{Magnitude Bode plot of $T$ (blue curve), $T^{SM}$ (orange curve) for the first design and $T = T^{SM}$ (yellow curve) for the second design.}
      \label{fig:blockDiag_exampleT}\vspace{-0.4cm}
\end{figure} 

Defining the performance requirement for $\omega\in (10^{-2},10^2)$ by $\bar{\sigma}(w_2(j\omega)T(j\omega)) < 1$ with $w_2(j\omega) = 1/1.25$ (value based on \cite{skogestad2007multivariable}), it is satisfied if 
$ \bar{\sigma}(T^{SM}(j\omega)) \leq \frac{1.25}{\bar{\sigma}(U^{-1}(j\omega))\bar{\sigma}(U(j\omega))}$. The issues caused by the conservative characteristic of the inequality for low frequencies is clear from the plot in Figure \ref{fig:blockDiag_performanceCondition}: to fulfill the requirement, $T_1^{SM}(j\omega)$ and $T_2^{SM}(j\omega)$ would need to have a magnitude less then $0.015$ for low frequencies, which deteriorates the reference tracking capability. 

\vspace{-0.3cm}
\begin{figure}[thpb]
      \centering
		\includegraphics[width = 6cm]{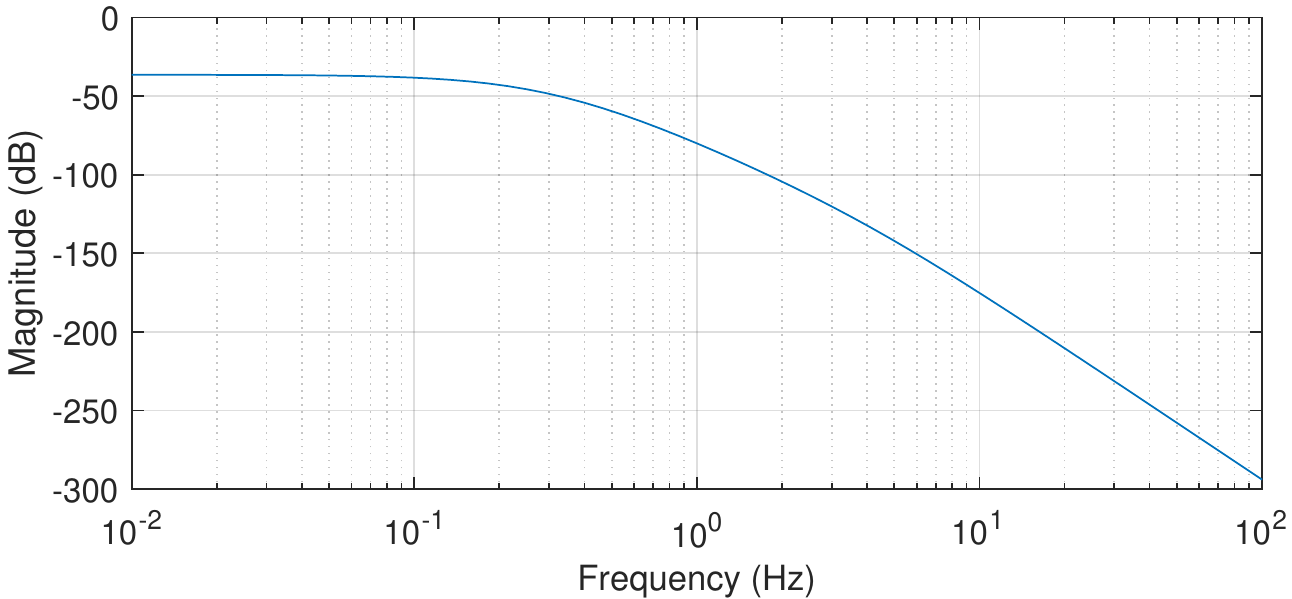} \vspace{-0.2cm}
      \caption{Magnitude plot of  $\frac{1.25}{\bar{\sigma}(U^{-1}(j\omega))\bar{\sigma}(U(j\omega))}$ over frequencies from $10^{-2}$ Hz to $10^2$ Hz: performance condition for $T^{SM}$.}
      \label{fig:blockDiag_performanceCondition}
\end{figure}
\vspace{-0.3cm} 

To have less restrictive conditions, one approach is to first perform the domain transformation and then specify conditions for each term in the matrix. From ${T(s) = U^{-1}(s)T^{SM}(s)U(s)}$, we obtain 
\vspace{-0.1cm}\begin{equation*}
T(s) = 
\begin{bmatrix}
T^{SM}_2(s) & \frac{s^3 + 29s^2 +100s + 90}{10}(-T^{SM}_1(s)+T^{SM}_2(s))\\
0 & T^{SM}_1(s)
\end{bmatrix}.\vspace{-0.1cm} 
\end{equation*} The pursued specification now is ${\parallel T_{ij}(j\omega)\parallel_\infty \leq 1.25}$ for all $i,j = 1,2$. The $\mathcal{H}_\infty$ norm can be applied here as $T(s)$ has to be composed by proper transfer functions. For this specific example, the obtained expression for $T(s)$ shows that by designing $T^{SM}_1(s) = T^{SM}_2(s)$, there will be no impact on the mass block $1$ position when changing the reference signal for the position of the mass block $2$ and ${\parallel T_{12}(j\omega)\parallel_\infty \leq 1.25}$. In addition, we have $T(s) = T^{SM}(s)$.


By maintaining the same $C_1^{SM}$ as in the previous design, the same $L^{SM}_1(s)$ and $T^{SM}_1(s)$ are obtained and $\parallel T_{22}(j\omega)\parallel_\infty \leq 1.25$ is satisfied. To have $T^{SM}_1(s) = T^{SM}_2(s)$, $L^{SM}_2(s)$ must be equal to $L^{SM}_1(s)$; as $P_2^{SM} = 1$, then $C_2^{SM} = L^{SM}_1(s)$. The new $C_2^{SM}$ satisfies the properness condition and, with this controller, $\parallel T_{11}(j\omega)\parallel_\infty \leq 1.25$ and the performance specification is achieved. This design is noted as \textbf{Design 2}.

The new magnitude Bode plot of $T$ is depicted in Figure \ref{fig:blockDiag_exampleT} (yellow dashed plots). Besides reference tracking and internal stability, performance is achieved in the sense of a change of the reference of one mass does not disturb the reference tracking of the other mass. 

This example showed that by designing the controller for the decoupled/essential system by only considering the decoupled plant, reference tracking and stability are guaranteed for the original control system, but performance of the decoupled system is not. For this specific example, the use of maximum singular value of the performance requirement and the transformation matrices separated by the submultiplicative property is proven excessively conservative. On the other hand, by performing the transformation first and after imposing the performance requirement for the design, 
a direction on how to design the controller to fulfill the performance requirement was obtained. This last strategy is completely dependent on the plant, as mentioned before, and can lead to complex performance requirements for the essential control system design.

\section{CONCLUSION}


The Smith-McMillan form decoupling compensator synthesis is presented. It is proven that the internal stability of the decoupled (essential) system is passed to the original system. Additionally, to guarantee performance, requirements defined on the maximum singular value of the closed-loop transfer matrices for the original system are translated to the essential domain. However, as commented and further demonstrated in the example, the presented performance requirement sufficient condition can be excessively restrictive. 

Also demonstrated in the example, a different strategy for translating the performance requirements can be applied. For this particular case, simple less restrictive conditions were obtained. However, the obtained performance condition strongly depends on the plant's transfer matrix and, for different applications, can result into complex requirements.

Future research steps include the investigation of classes of systems that benefit from this decoupling compensator design technique and its applicability for the definition of types and locations of sensors and actuators in system's design.


\addtolength{\textheight}{-12cm}   

%

\bibliographystyle{IEEEtran}
\bibliography{IEEEabrv,bibSmithMcMillanDec}

\end{document}